\journal{Computers \& Security} 
\pgfplotsset{compat=1.10}
\definecolor{Gray}{gray}{0.95}
\newcommand{\policyholder}{\mathcal{P}}
\newcommand{\insurer}{\mathcal{I}}
\newcommand{\prob}{\Pr}
\newcommand{\calU}{\ensuremath{\mathcal{U}}}
\newcommand{\UP}[2]{\ensuremath{\mathcal{U}}^{P_{\text{#1}}}_{\text{#2}}}
\newcommand{\UI}[2]{\ensuremath{\mathcal{U}}^{\mathcal{I}}_{\mathcal{P}_{\text{#1}}\text{#2}}}
\newcommand{\Manos}[1]{\todo[color=blue!10]{\textbf{MP}: #1}}
\newcommand{\change}[2]{\textcolor{red}{\sout{#1}}\textcolor{ForestGreen}{#2}}
\newif\ifHighlightRevision
\newenvironment{revision}{\ifHighlightRevision\color{blue}\fi}{\color{black}}
\newtheorem{theorem}{Theorem}
\begin{document}

\begin{frontmatter}


\title{Post-Incident Audits on Cyber Insurance Discounts}


\author[label1]{Sakshyam Panda}
\ead{s.panda@surrey.ac.uk}

\author[label2]{Daniel W Woods}
\ead{daniel.woods@cs.ox.ac.uk}

\author[label3]{Aron Laszka} 
\ead{alaszka@uh.edu}

\author[label4]{Andrew Fielder}
\ead{andrew.fielder@imperial.ac.uk}

\author{Emmanouil Panaousis\corref{cor1}\fnref{label1}}
\ead{e.panaousis@surrey.ac.uk}
\cortext[cor1]{corresponding author}

\address[label1]{University of Surrey}
\address[label2]{University of Oxford}
\address[label3]{University of Houston}
\address[label4]{Imperial College London}

\begin{abstract}
\frenchspacing
We introduce a game-theoretic model to investigate the strategic interaction between a cyber insurance policyholder whose premium depends on her self-reported security level and an insurer with the power to audit the security level upon receiving an indemnity claim. Audits can reveal fraudulent (or simply careless) policyholders not following reported security procedures, in which case the insurer can refuse to indemnify the policyholder. However, the insurer has to bear an audit cost even when the policyholders have followed the prescribed security procedures. As audits can be expensive, a key problem insurers face is to devise an auditing strategy to deter policyholders from misrepresenting their security levels to gain a premium discount. This decision-making problem was motivated by conducting interviews with underwriters and reviewing regulatory filings in the US; we discovered that premiums are determined by security posture, yet this is often self-reported and insurers are concerned by whether security procedures are practised as reported by the policyholders.

To address this problem, we model this interaction as a Bayesian game of incomplete information and devise optimal auditing strategies for the insurers considering the possibility that the policyholder may misrepresent her security level. To the best of our knowledge, this work is the first theoretical consideration of post-incident claims management in cyber security. Our model captures the trade-off between the incentive to exaggerate security posture during the application process and the possibility of punishment for non-compliance with reported security policies. Simulations demonstrate that common sense techniques are not as efficient at providing effective cyber insurance audit decisions as the ones computed using game theory.


\end{abstract}

\begin{keyword}
Game theory \sep Cyber insurance \sep Economics of security \sep Premium discount \sep Post-incident audit \sep Security 


\end{keyword}

\end{frontmatter}

\section{Introduction}
\frenchspacing


No amount of investment in security eliminates the risk of loss \cite{anderson2010security}. Driven by the frequency of cyber attacks, risk-averse organizations increasingly transfer residual risk by purchasing cyber insurance. As a result, the cyber-insurance market is predicted to grow to between \$7.5 and \$20 billion by 2020, as identified in~\cite{romanosky2016examining}. 

Similar to other types of insurance, cyber-insurance providers pool the risk from multiple policyholders together and charge a premium to cover the underlying risk.
Yet cyber risks like data breaches are qualitatively different from traditional lines like property insurance. For instance, buildings are built once according to building regulations, whereas computer systems continually change as mobile devices blur the network perimeter and software evolves with additional product features and security patches. Adversaries update strategies to exploit vulnerabilities emerging from technological flux.

Further, the problems of moral hazard and adverse selection become more pressing. Adverse selection results from potential clients being more likely to seek insurance if they face a greater risk of loss. Meanwhile, information asymmetry limits insurers in assessing the applicant's risk. The risk depends on computer systems with many devices in different configurations, users with a range of goals, and idiosyncratic organizational security teams, policies, and employed controls. Collecting information is a costly procedure, let alone assessing and quantifying the corresponding risk.

Moral hazard occurs when insureds engage in riskier behaviour in the knowledge that the insurer will indemnify any losses. Even if initial assessment reveals that security policies are in place, it is no guarantee that they will be followed given that ``a significant number of security breaches result from employees' failure to comply with security policies'' \cite{beautement2009compliance}. Technological compliance suffers too, as evidenced by the Equifax breach resulting from not patching a publicly known vulnerability \cite{moore2017harms}.  

Insurance companies collect risk information about applicants to address adverse selection. We interviewed $9$ underwriters in the UK and found that $8$ of them use self-reported application forms; $7$ of them use telephone calls with the applicant; $3$ of them use external audits; and only one uses on-site audits\footnote{Note that these are mutually inclusive events.}. This suggests that the application process relies on accurate self-reporting of risk factors. Cyber insurance application forms collect information about questions ranging from generic business topics to questions related to information security controls \cite{woods2017mapping}.   

Romanosky et al. \cite{romanosky2017content} introduced a data set resulting from a US law requiring insurers to file documents describing their pricing schemes.  Pricing algorithms depended on the applicant's industry, revenue, past-claims history, and---most relevant to this paper---the security controls employed by the organization. The insurer collects all this information and sets a price according to the formulas described in \cite{romanosky2017content}, reducing the premium when security controls are reported to be in place. This was corroborated by interviews with insurance professionals in Sweden \cite{franke2017cyber}. Surprisingly, individual underwriters determine the size of the discount for security controls on a case-by-case basis, even though this can be as large as 25\% of the premium.

Moral hazard is generally addressed by including terms in the policy that insureds must follow for their coverage to be valid. An early study found that coverage was excluded for a ``failure to take reasonable steps to maintain and upgrade security'' \cite{kesan2005cyberinsurance}. A study from $2017$ found few exclusions prescribing security procedures but the majority of policies contained exclusions for ``dishonest acts'' \cite{romanosky2017content}. One such dishonest act is violating the \emph{principle of up-most good faith} requiring insureds to not intentionally deceive the company offering the insurance \cite{thoyts2010insurance}. 

This principle and the corresponding exclusion mitigates moral hazard, which might otherwise drive honest firms to de-prioritize compliance with security procedures. Further, it imposes a cost on fraudulent organizations claiming that entirely fictional security products are in place to receive a lower premium. For example, one insurer refused to pay out on a cyber policy because ``security patches were no longer even available, much less implemented'' \cite{columbia2016complaint} despite the application form reporting otherwise. We do not consider the legality of this case, but include it as evidence that insurers conduct audits to establish whether there are grounds for refusing coverage.  

Further, insurers offer discounts for insureds based on security posture and often rely on self-reports that security controls are in place. Interviewing insurers revealed concerns about whether security policies were being complied with in reality. Besides, larger premium discounts increase the incentive to misrepresent security levels potentially necessitating a higher frequency of investigation which is uneconomical for insurers. To explore how often should insurers audit cyber insurance claims, we develop a game-theoretic model that takes into account relevant parameters from pricing data collected by analyzing $26$ cyber insurance pricing schemes filed in California and identifies different optimal auditing strategies for insurers. Our analytical approach relies on \textit{Perfect Bayesian Equilibrium} (PBE). We complement our analysis with simulation results with parameter values from the collected data. We further make ``common sense" assumptions regarding auditing strategies and show that in general, insurers are better-off with the game-theoretic strategies. The results will be of interest to policymakers in the United States and the European Union, who believe cyber insurance can improve security levels by offering premium discounts~\cite{woods2017policy}.        


The remainder of this paper is organized as follows. Section \ref{section:relatedwork} identifies existing approaches to modeling the cyber insurance market. We introduce our game-theoretic model in Section \ref{section:model} and present the analysis in Section \ref{section:decisionanalysis}. Section~\ref{section:modelevaluation} details the our methodology for data collection which instantiate our simulation results. Finally, we end with concluding remarks in Section \ref{section:conclusion}.

\section{Related Work} \label{section:relatedwork}
\frenchspacing
This paper continues the trend towards rectifying the ``substantial discrepancy'' \cite{bohme2010modeling} between early cyber insurance models and informal claims about the insurance market. Early research considered factors relevant to the viability of a market. Interdependent security occurs when the risk ``depends on the actions of others'' \cite{kunreuther2003interdependent,laszka2014survey}. Optimists argued that insurers could coordinate the resulting collective action problem \cite{ogut2005,bolot2008new}, leading to a net social welfare gain and a viable market. Skeptics instead focused on the ``high correlation in failure of information systems'' \cite{bohme2006,baer2007cyberinsurance,laszka2014estimating}, citing it as a major impediment to the supply of cyber insurance. Recent empirical work \cite{romanosky2017content} analyzing 180 cyber insurance filings shows that the cyber insurance market is viable.

Beyond viability, researchers explored how insurers can intervene by sharing information, assessing the security of service providers, and investing in software quality. The insurer sharing information about claims data was shown to increase social welfare in \cite{woods2018monte}. Khalili et al. \cite{khalili2018embracing} show that underwriting service providers improves both insurer profit and social welfare. Laszka et al. \cite{laszka2015should} found that the insurer directly investing in software quality can ``reduce non-diversifiable risks and can lead to a more profitable cyber insurance market''.  

The timing of the insurer's intervention plays is an important strategic aspect. Ex-ante interventions for the insurer include risk assessments and security investments before the policy term begins. Shetty et al.~\cite{shetty2010competitive} investigated an insurer who could assess security levels perfectly or not at all, concluding that the latter cannot support a functioning market.~Majuca et al.~\cite{majuca2006evolution} showed that ex-ante assessments in combination with discounts for adopting security controls can lead to an increase in social welfare.~A more recent model introduces stochastic uncertainty about the policyholder's security level~\cite{laszka2018cyber}. 

None of these adverse selection studies consider the potential for insureds to misrepresent their security posture.~Allowing malicious insureds to ``subvert insurer monitoring'' in both the application process and over the policy period was studied in~\cite{schwartz2013cyber}.~The analysis showed that no cyber insurance market could exist.~However, we know from~\cite{columbia2016complaint} that insurers audit insureds and refuse coverage for fraudulent claims.~Our model deviates from ~\cite{schwartz2013cyber} by allowing the insurer to audit claims and withdraw coverage if the insured misrepresents information.

Beyond ex-ante assessment, insurers make decisions regarding ex-post claims management. These decisions have received less attention.~The impact of secondary losses on the policyholder's incentive to claim could lead to over-priced products~\cite{bandyopadhyay2009managers}. Further,~insurers can aggregate claims information to increase social welfare~\cite{woods2018monte}. Empirically it has been suggested insurers will refuse ``claims arising from the insured's failure to maintain security levels'', but the strategic aspects of insurers investigating the incidents leading to claims has not been considered~\cite{kesan2005cyberinsurance}.

The literature on economic theory of insurance fraud has developed two main approaches: \textit{costly state verification} and \textit{costly state falsification} \cite{picard2013economic}. The costly state falsification approach assesses the client's behaviour towards a claim. We consider the costly state verification approach, which focuses on the insurer identifying fraudulent claims. The insurer can verify the claims via auditing but has to bear a verification cost. 
The optimal claim handling usually involves random auditing \cite{picard1996auditing}.

Our contribution to the literature is the first theoretical consideration of post-incident claims management. Our model captures the trade-off between the incentive to exaggerate security posture to receive a premium discount and the possibility of punishment for non-compliance with the reported security policies. We consider misrepresenting security posture a strategic choice for the insured and allow the insurer to respond by auditing claims. Not allowing the insurer to do so leads to market collapse \cite{schwartz2013cyber}. 

\section{Model}  \label{section:model}
\frenchspacing
We model the interaction between the policyholder $\policyholder$ and insurer $\insurer$ as a one-shot dynamic game called the \textit{Cyber Insurance Audit Game} (CIAG), which is represented in Figure \ref{fig:one_shot_insurance_model_nature}. Each decision node of the tree represents a state where the player with the move has to choose an action. The leaf nodes present the payoffs of the players for the sequence of chosen actions. The payoffs are represented in the format $\binom{x}{y}$, where $x$ and $y$ are the payoffs of $\policyholder$ and $\insurer$, respectively. Table \ref{tab:list_symbols} presents the list of symbols used in our model. 
Note that the initial wealth of the policyholder ($W$) and the premium for insurance coverage ($p$) are omitted from the tree for ease of presentation.  

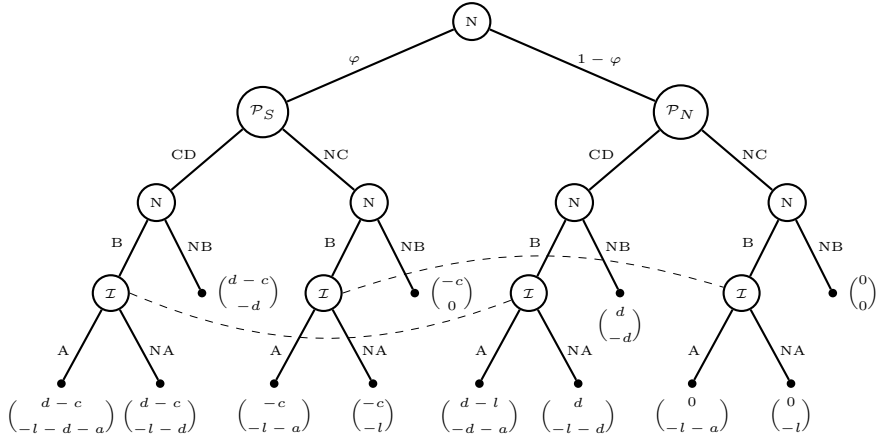
\begin{figure*}[h!] 
\centering
\begin{tikzpicture}[scale=1,thick,-,text centered,level distance=1.2cm,
  font=\tiny,
  solidnode/.style={circle,draw,inner sep=1mm},
  hollownode/.style={circle,draw,inner sep=.8mm,fill=white,line width=2},
  end/.style = {circle,draw,inner sep=0.3mm, fill=black},
  level 1/.style={sibling distance=5.5cm},
  level 2/.style={sibling distance=2.8cm},
  level 3/.style={sibling distance=1.2cm},
  level 4/.style={sibling distance=1.3cm}]
  
\node (root) [solidnode] {N}
  child{ 
    node[solidnode] {$\policyholder_S$}
    child{
      node[solidnode] {N}
        child{
          node(1)[solidnode] {$\insurer$}
            child{
              node[end, label=south:{$\displaystyle\binom{d-c}{-l-d-a}$}] {}
              edge from parent node[left, xshift=-1, yshift=-2] {A}
            }
            child{
              node[end, label=south:{$\displaystyle\binom{d-c}{-l-d}$}] {}
              edge from parent node[right, xshift=0, yshift=-2] {NA}
            }
          edge from parent node[left, xshift=0, yshift=2] {B}  
        }
        child{
          node[end, label=right:{$\displaystyle\binom{d-c}{-d}$}] {}
          edge from parent node[right, xshift=-2, yshift=3]{NB}
        }
      edge from parent node[left, xshift=0, yshift=2]{CD}
    }
    child{
      node[solidnode] {N}
        child{
          node(2)[solidnode] {$\insurer$}
            child{
              node[end, label=south:{$\displaystyle\binom{-c}{-l-a}$}] {}
              edge from parent node[left, xshift=-1, yshift=-2] {A}
            }
            child{
              node[end, label=south:{$\displaystyle\binom{-c}{-l}$}] {}
              edge from parent node[right, xshift=0, yshift=-2] {NA}
            }
          edge from parent node[left, xshift=0, yshift=2] {B}  
        }
        child{
          node[end, label=right:{$\displaystyle\binom{-c}{0}$}] {}
          edge from parent node[right, xshift=-2, yshift=3]{NB}
        }
      edge from parent node[right, xshift=-2, yshift=2]{NC}
    }
    edge from parent node[left, xshift=0, yshift=2]{$\varphi$}
  }
  child{ 
    node[solidnode] {$\policyholder_N$}
      child{
        node[solidnode] {N}
          child{
            node(3)[solidnode] {$\insurer$}
              child{
                node[end, label=south:{$\displaystyle\binom{d-l}{-d-a}$}] {}
                edge from parent node[left, xshift=-1, yshift=-2] {A}
              }
              child{
                node[end, label=south:{$\displaystyle\binom{d}{-l-d}$}] {}
                edge from parent node[right, xshift=0, yshift=-2] {NA}
              }
            edge from parent node[left, xshift=0, yshift=2] {B}
          }
          child{
            node[end, label=below:{$\displaystyle\binom{d}{-d}$}] {}
            edge from parent node[right, xshift=-2, yshift=3] {NB}
          }
        edge from parent node[left, xshift=0, yshift=2]{CD}
      }
      child{
        node[solidnode] {N}
          child{
            node(4)[solidnode] {$\insurer$}
              child{
                node[end, label=south:{$\displaystyle\binom{0}{-l-a}$}] {}
                edge from parent node[left, xshift=-1, yshift=-2] {A}
              }
              child{
                node[end, label=south:{$\displaystyle\binom{0}{-l}$}] {}
                edge from parent node[right, xshift=0, yshift=-2] {NA}
              }
            edge from parent node[left, xshift=0, yshift=2] {B}
          }
          child{
            node[end, label=right:{$\displaystyle\binom{0}{0}$}] {}
            edge from parent node[right, xshift=-2, yshift=3] {NB}
          }
        edge from parent node[right, xshift=-2, yshift=2]{NC}
      }
    edge from parent node[right, xshift=-2, yshift=2]{$1-\varphi$}
  };

\draw[dashed, very thin](1.east) to[bend right=22] (3);
\draw[dashed, very thin](2.east) to[bend left=18] (4);

\end{tikzpicture}
\vspace{-0.3cm}
\caption{Extensive form representation of the Cyber Insurance Audit Game (CIAG) with the Nature deciding the types of policyholder and the occurrence of an incident.}
\label{fig:one_shot_insurance_model_nature}
\end{figure*}

\begin{table}[h!]
\footnotesize
\centering
\caption{List of Symbols}
\label{tab:list_symbols}
\renewcommand*{\arraystretch}{0.75}
\begin{tabular}{|c|l|}
  \hline
  \textbf{Symbol} & \textbf{Description} \\ 
  \hline
  $a$    			        & Cost of audit \\ 
  \rowcolor{Gray}$c$    	& Security investment cost \\
  $d$                       & Discount on premium for better security level \\
  \rowcolor{Gray}$l$  		& Loss due to a breach\\
  $p$ 				        & Premium for the insurance coverage \\
  \rowcolor{Gray}$W$        & Initial wealth of the policyholder \\
  $\beta$    		        & Probability of a breach \\
  \rowcolor{Gray}$\beta^*$  & Probability of a breach after investment \\
  \hline
\end{tabular}
\end{table}

We assume that the policyholder does not make a decision regarding its security investment in our model, because that decision has been made before seeking insurance. Hence, a particular applicant has a certain fixed type (with respect to security), but the insurer does not know the type of an applicant due to information asymmetry. We can model the insurer's uncertainty by assuming that it encounters certain types of applicants with certain probabilities. The type of the policyholder is modeled as an outcome of a random event, that is, nature (N) decides the policyholder's type with respect to additional security investments, i.e., $\policyholder_S$ represents one with additional security investments and $\policyholder_N$ one without. \begin{revision} Further, nature also decides whether a security incident occurs for each policyholder, represented as B (breach) and NB (no breach). \end{revision} The probability of an incident depends on the type of the policyholder. 

Nature moves first by randomly choosing the policyholder's type according to a known \textit{a priori} distribution: $\policyholder_S$ with probability $\prob(\policyholder_S) = \varphi$ and $\policyholder_N$ with probability $\prob(\policyholder_N) = 1 - \varphi$, $\varphi \in [0,1]$. The type is private to a policyholder and the insurer knows only the probability distribution over the different types. Hence, the game is of incomplete information. Regardless of the types, the policyholder's actions are CD (claim premium discount) and NC (no discount claim). Nature then decides the occurrence of the breach on a policyholder, followed by the insurer's decision to audit (A) or not audit (NA) only in the event of a breach. We assume that in CIAG, an audit investigates the misrepresentation of the cyber security investment and the claim for receiving a premium discount. In particular, it investigates whether the policyholder had indeed invested in cyber security countermeasures before claiming this discount. Our model does not assume that there is a particular type of audit.

Having described the players and actions, in the following we present the interaction between $\policyholder$ and $\insurer$.~First, $\policyholder$ has signed up for a cyber insurance contract by paying a premium $p$.~The type of $\policyholder$ is decided by the nature based on an additional security investment. We assume that this investment equals $c$. This investment will decrease the probability of $\policyholder$ being compromised from $\beta$ to $\beta^*$. 

At the same time the investment will enable $\policyholder$ to claim a premium discount $d$. We assume that $\insurer$ offers $d$ without performing any audit since investigating at this point would mean that $\insurer$ would have to audit policyholders who might never file an indemnity claim, thereby incurring avoidable losses.  

We further assume that if $\policyholder$ decides to claim a discount without making the security investment, she will still receive $d$ but risks having a future claim denied after an audit. After an incident, where $\policyholder$ suffers loss $l$, insurer $\insurer$ has to decide whether to conduct an audit (e.g. forensics) to investigate details of the incident including the security level of $\policyholder$ at the time of breach. We assume that this audit costs $a$ to the insurer. This audit will result in: 
    \begin{itemize}
        \item[Case 1:] confirming that $\policyholder$ has indeed invested in security as claimed, in which case $\insurer$ will pay the indemnity. We assume full coverage so the indemnity payment equals $l$.
        
        \item[Case 2:] discovering that $\policyholder$ has misrepresented her security level, $\insurer$ refuses to pay the indemnity and $\policyholder$ has to bear the incident cost $l$. We assume that this case falls within the contract period during which $\policyholder$ is locked-in by the contract. We define \textit{misrepresentation} as when $\policyholder$ is fraudulent or simply careless in maintaining the prescribed security level in the insurance contract and reports a fabricated security level to get the premium discount.
    \end{itemize}

In Figure \ref{fig:one_shot_insurance_model_nature}, some decision nodes of $\insurer$ are connected through dotted lines indicating that the $\insurer$ cannot distinguish between the connected nodes due to unknown $\policyholder$ type. These sets of decision nodes define the \textit{information sets} of the insurer. An information set is a set of one or more decision nodes of a player that determines the possible subsequent moves of the player conditioning on what the player has observed so far in the game. The insurer also has two information sets, one where the breach has occurred to the policyholder who has claimed premium discount CD$=\{$(CD$|\policyholder_S$), (CD$|\policyholder_N$)$\}$ and the one where the breach has occurred to the policyholder who has not claimed premium discount NC$=\{$(NC$|\policyholder_S$), (NC$|\policyholder_N$)$\}$. Each of the insurer's information sets has two separate nodes since the insurer does not know the real type of the policyholder when deciding on whether to audit or not.     

In outcome CD,B,A, the expected utility of the $\policyholder_S$ is 
\begin{equation}
    \UP{S}{CD,B,A} = U(W - p + d - c) , \nonumber
\end{equation}
where $U$ is a utility function, which we assume to be monotonically increasing and concave, $W$ is the policyholder's initial wealth, $p$ is the premium paid to the insurer, $d$ is the premium discount, and $c$ is the cost of the security investment. We assume the utility function to be concave to model the risk aversion of policyholders as defined in \cite{bohme2010modeling}. Note that we assume that $W > p > d$ and $W - p + d > c$, and both $W$ and $p$ are exogenous to our model.
\begin{align}
    \UP{S}{CD,B,A} = \UP{S}{CD,B,NA} = \UP{S}{CD,NB} &= U(W - p + d - c) \\
    \UP{S}{NC,B,A} = \UP{S}{NC,B,NA} = \UP{S}{NC,NB} &= U(W - p - c) \\
    \UP{N}{CD,B,A} &= U(W - p + d - l) \\
    \UP{N}{CD,B,NA} = \UP{N}{CD,NB} &= U(W - p + d) \\
    \UP{N}{NC,B,A} = \UP{N}{NC,B,NA} = \UP{N}{NC,NB} &= U(W - p) 
\end{align}

We further assume that the policyholder's goal is to maximize her expected utility. The expected utility of the policyholder is influenced by the possibility of a breach and the insurer's probability to audit. In particular, the expected utility for $\policyholder_S$ will be the same regardless of the insurer's probability to audit and the breach probability due to indemnification. $\policyholder_N$, however, will need to consider these probabilities. 

In the outcome $\policyholder_S,$CD$,$B$,$A the insurer's utility is
\begin{equation}
    \UI{S}{,CD,B,A} = p - l - d - a , \nonumber
\end{equation}
where $p$ is the premium, $d$ is the premium discount offered, $l$ is the loss claimed by the policyholder, and $a$ is the audit cost.

In other outcomes, the insurer's utility is as follows:
\begin{align}
    \UI{S}{,CD,B,A} &= p - l - d - a \\
    \UI{S}{,CD,B,NA} = \UI{N}{,CD,B,NA} &= p - l - d \\
    \UI{S}{,CD,NB} = \UI{N}{,CD,NB} &= p - d \\
    \UI{S}{,NC,B,A} = \UI{N}{,NC,B,A} &= p - l - a \\
    \UI{S}{,NC,B,NA} = \UI{N}{,NC,B,NA} &= p - l \\
    \UI{S}{,NC,NB} = \UI{N}{,NC,NB} &= p \\
    \UI{N}{,CD,B,A} &= p - d - a
\end{align}

\section{Decision Analysis} \label{section:decisionanalysis}
\frenchspacing
In this section, we analyze the equilibria of the proposed Cyber Insurance Audit Game (Figure \ref{fig:one_shot_insurance_model_nature}), which is a dynamic Bayesian game with incomplete information.~The analysis is conducted using the game-theoretic concept of \textit{Perfect Bayesian Equilibrium} (PBE).~This provides insights into the strategic behaviour of the policyholder $\policyholder$ concerning discount claims and the insurer~$\insurer$'s auditing decision. 

A PBE, in the context of our game, can be defined by Bayes requirements discussed in \cite{gibbons1992primer}:
\begin{itemize}
    \item[] \textbf{Requirement 1:} The player at the time of play must have a belief about which node of the information set has been reached in the game. The beliefs must be calculated using Bayes' rule, whenever possible, ensuring that they are consistent throughout the analysis. 
    
    \item[] \textbf{Requirement 2:} Given these beliefs, a player's strategy must be sequentially rational. A strategy profile is said to be sequentially rational if and only if the action taken by the player with the move is optimal against the strategies played by all other opponents given the player's belief at that information set.
    
    \item[] \textbf{Requirement 3:} The player must update her beliefs at the PBE to remove any implausible equilibria. These beliefs are determined by Bayes' rule and players' equilibrium strategies. 
\end{itemize}

In the event of a security breach, the insurer's decision to audit or not must be based on beliefs regarding the policyholder's types.~More specifically, a belief is defined as a probability distribution over the nodes within the insurer's information set, conditioned that the information node has been reached. The insurer has two information sets subjected to whether the policyholder has claimed premium discount or not which are CD$=\{$(CD$|\policyholder_S$), (CD$|\policyholder_N$)$\}$ and NC$=\{$(NC$|\policyholder_S$), (NC$|\policyholder_N$)$\}$. The insurer assigns a belief to each of these information sets. Let $\mu$ and $\lambda$ be the insurer's beliefs where
\begin{align}
    \mu = \prob(\policyholder_S|\text{CD}) \nonumber \\
    \lambda = \prob(\policyholder_S|\text{NC}) \nonumber
\end{align}
That is, for the first information set, the insurer believes with $\mu$ and $1-\mu$ that the premium discount claim is from $\policyholder_S$ and $\policyholder_N$, respectively. Similarly, for the second information set, the insurer believes with $\lambda$ that $\policyholder_S$ has not claimed premium discount and believes with $1-\lambda$ that $\policyholder_N$ has not claimed premium discount.  

The first requirement of PBE dictates that Bayes' rule should be used to determine beliefs. Thus
\begin{align}
    \mu = \frac{\prob(\policyholder_S)\prob(\text{CD}|\policyholder_S)}{\prob(\policyholder_S)\prob(\text{CD}|\policyholder_S) + \prob(\policyholder_N)\prob(\text{CD}|\policyholder_N)} \label{eq:mu}\\
    \lambda = \frac{\prob(\policyholder_S)\prob(\text{NC}|\policyholder_S)}{\prob(\policyholder_S)\prob(\text{NC}|\policyholder_S) + \prob(\policyholder_N)\prob(\text{NC}|\policyholder_N)} \label{eq:lambda}
\end{align}

From the payoffs in Figure \ref{fig:one_shot_insurance_model_nature}, it can be clearly seen that CD is always a preferred choice for $\policyholder_S$. Whereas, the insurer always gets a better payoff for choosing NA against NC irrespective of policyholder's type.~Having defined the necessary concepts, next, we identify the possible PBEs of the game for the following constraints 
\begin{equation} \label{constraint_1}
    l > a \text{ and } l > d 
\end{equation}
\begin{equation} \label{constraint_2}
    l > a \text{ and } l < d 
\end{equation}
\begin{equation} \label{constraint_3}
    l < a \text{ and } l > d 
\end{equation}
\begin{equation} \label{constraint_4}
    l < a \text{ and } l < d
\end{equation}
where the PBEs are strategy profiles and beliefs that satisfies all the three requirements described earlier. 

\vspace{\baselineskip}
\begin{theorem}
For $\varphi > \frac{l-a}{l}$, $l>a$ and $l>d$, CIAG has only one pure-strategy PBE \big(\textnormal{(CD,CD)},\textnormal{(NA,NA)}\big), in which the policyholder claims premium discount regardless of her type while the insurer does not audit regardless of whether the policyholder claims or not a discount, with $\mu = \varphi$ and arbitrary $\lambda \in [0, 1]$.
\end{theorem}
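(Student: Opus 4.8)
The plan is to combine a strict-dominance reduction on the policyholder's side with a belief-dependent best-response analysis at the insurer's two information sets, close the loop via Bayesian consistency of the beliefs, and then obtain uniqueness by enumerating the few profiles that survive. I would start on the policyholder's side: reading $\policyholder_S$'s payoffs off Figure~\ref{fig:one_shot_insurance_model_nature}, the security type receives $U(W-p+d-c)$ from CD and $U(W-p-c)$ from NC at \emph{every} terminal node, independently of the breach draw and of the insurer's audit decision. Since $d>0$ and $U$ is increasing, CD strictly dominates NC for $\policyholder_S$, so any pure-strategy PBE must have $\policyholder_S$ play CD. This discards the two profiles in which $\policyholder_S$ plays NC and leaves only the candidates (CD,CD) and (CD,NC).

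The key computation is the insurer's best response at each information set as a function of its beliefs. At the NC information set the insurer earns $p-l$ from NA and $p-l-a$ from A for every belief $\lambda$, so, as $a>0$, NA strictly dominates A there regardless of $\lambda$; this formalizes the observation stated before the theorem. At the CD information set, averaging over the belief $(\mu,1-\mu)$, NA yields $p-l-d$ while A yields $\mu(p-l-d-a)+(1-\mu)(p-d-a)=p-d-a-\mu l$. Comparing the two, NA is a best response precisely when $\mu\ge\frac{l-a}{l}$, and the hypothesis $l>a$ places this threshold strictly inside $(0,1)$, so the condition is non-degenerate.

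It remains to verify the claimed profile and establish uniqueness. Under (CD,CD) both types pool on CD, so Bayes' rule gives $\mu=\varphi$, while the NC set lies off the equilibrium path and leaves $\lambda\in[0,1]$ unconstrained; the hypothesis $\varphi>\frac{l-a}{l}$ then selects NA at the CD set, NA is already forced at the NC set, and against the resulting insurer strategy (NA,NA) the type $\policyholder_N$ prefers CD because $U(W-p+d)>U(W-p)$ --- hence (CD,CD),(NA,NA) with $\mu=\varphi$ and arbitrary $\lambda$ is a PBE. To eliminate the remaining candidate (CD,NC), I would note it is separating, so Bayes forces $\mu=1$ and $\lambda=0$ and the insurer again plays (NA,NA); but then $\policyholder_N$ strictly gains by deviating from NC to CD, since at $\mu=1$ the insurer still plays NA at the CD set and $\policyholder_N$ collects $d$ with no audit penalty, moving from $U(W-p)$ to $U(W-p+d)$. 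This deviation violates sequential rationality, so (CD,CD),(NA,NA) is the unique pure-strategy PBE.

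The main obstacle is not any single inequality but the careful treatment of the off-path NC information set: the dominance of NA there must be shown to hold for the arbitrary belief $\lambda$ (which is exactly what lets $\lambda$ be left unrestricted), and the uniqueness claim must be read as quantifying over pure strategies only. I would also flag that, among the regime assumptions, it is $l>a$ that keeps the audit threshold $\frac{l-a}{l}$ inside $(0,1)$ and, together with $\varphi>\frac{l-a}{l}$, drives the no-audit conclusion, whereas $l>d$ merely fixes the parameter regime \eqref{constraint_1} in which the theorem is posed.
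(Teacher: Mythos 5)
Your proposal is correct and follows essentially the same route as the paper's proof: Bayesian consistency gives $\mu=\varphi$ on path and an arbitrary $\lambda$ off path, the insurer's expected payoffs at the CD information set yield the threshold $\frac{l-a}{l}$ (your computation $p-d-a-\mu l$ versus $p-l-d$ matches the paper's, with $\mu=\varphi$ substituted), NA dominates at the NC set for every $\lambda$, and both policyholder types then prefer CD against (NA,NA). The one place where you go beyond the paper is the uniqueness step: the paper's proof only \emph{verifies} that \big((CD,CD),(NA,NA)\big) is a PBE, relying on the remarks preceding the theorem (CD dominant for $\policyholder_S$, NA dominant against NC) and on an exhaustive enumeration deferred to the discussion of Figure~\ref{fig:one_shot_solution_space_graph}, whereas you explicitly reduce the candidates to (CD,CD) and (CD,NC) via strict dominance and then kill the separating profile by showing Bayes forces $\mu=1$, the insurer still plays NA, and $\policyholder_N$ profitably deviates to CD. That makes your argument a more complete justification of the ``only one'' claim in the theorem statement; the substance of the existence argument is otherwise identical.
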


\begin{proof} The existence of pure-strategy PBE can be verified by examining the strategy profile (CD,CD) and (NA,NA) with constraint in Equation~\eqref{constraint_1}. 
This represents the case where an incident has occurred on the policyholders who have claimed premium discount.    
    \begin{itemize}
        \item[a)] \textit{Belief consistency}: Due to information asymmetry and as only one of the insurer's information set is in the equilibrium path, she assigns $\prob(\text{CD}|\policyholder_S) = 1$ and $\prob(\text{CD}|\policyholder_N) = 1$. Thus, using Bayes' rule in Equation~\eqref{eq:mu} gives
        \begin{align}
            \mu = \varphi/(\varphi + 1 - \varphi) = \varphi \nonumber
        \end{align}
        On the other hand, applying Bayes' rule in Equation \eqref{eq:lambda} to $\lambda$ yields $0/0$ which is an indeterminate result. This implies that if the equilibrium is actually played then the off-equilibrium information set NC should not be reached restricting an update to the insurer's belief with Bayes' rule. Due to an indeterminate result, the insurer specifies an arbitrary $\lambda \in [0,1]$.
        
        \item[b)] \textit{Insurer's sequentially rational condition given updated beliefs}: The expected payoff for each action of the insurer are
        
        \begin{align}
            \calU_\text{A} &= \varphi \cdot \UI{S}{,CD,B,A} + (1 - \varphi) \cdot \UI{N}{,CD,B,A} \\
            &= \varphi(p-d-l-a) + (1-\varphi)(p-d-a) \nonumber \\
            &= p - \varphi l - d - a \nonumber
        \end{align}
        \begin{align}
            \calU_{\text{NA}} &= \varphi \cdot \UI{S}{,CD,B,NA} + (1 - \varphi) \cdot \UI{N}{,CD,B,NA} \\
            &= \varphi(p-d-l) + (1-\varphi)(p-d-l) \nonumber \\
            &= p - d - l \nonumber
        \end{align}
        The condition for A to be sequentially rational is $\calU_\text{A} > \calU_{\text{NA}}$ which gives 
        \begin{align}
            p - \varphi l - d -a &> p - d - l \nonumber \\
            \varphi & \le \frac{l-a}{l} = \varphi^*
        \end{align}
        Now considering the off-equilibrium information set NC, the insurer always gets a better payoff by choosing NA. Thus, NA is a dominant strategy of the insurer against the off-equilibrium information set NC. The insurer's belief $\lambda$ remains arbitrary. 
        
        \item[c)] \textit{Policyholder's sequentially rational condition given insurer's best response}: Knowing the best responses of the insurer i.e. (A,NA) for $\varphi \le \varphi^*$ and (NA,NA) for $\varphi > \varphi^*$ against CD, we derive the best response of the policyholder.
        For insurer's strategy profile (NA,NA), $\policyholder_S$ gets a payoff $U(W - p + d - c)$ by choosing CD. If she deviates to NC, she will get a payoff $U(W - p - c)$ which is undesirable. Whereas, $\policyholder_N$ receives a payoff $U(W - p + d)$ by choosing CD. If she deviates to NC will get a payoff $U(W - p)$ which is also undesirable. Thus, (CD,CD) and (NA,NA) can be verified as a PBE given $\varphi > \varphi^*$ and $\mu = \varphi$. Note that the PBE includes the updated beliefs of the insurer implicitly satisfying Requirement 3.     
    \end{itemize}
\end{proof}    

From the PBE, we can see that if $l>a$, $l>d$ and insurer's belief $\varphi$ is greater than the threshold value $\varphi^*$, not auditing a breach is optimal for the insurer and claiming premium discount is optimal for the policyholder regardless of her type. When the insurer's belief $\varphi \le \varphi^*$ there exist no pure-strategy PBE. As a result, both players will mix up their strategies. We discuss this mixed-strategy PBE below. Note that, in the following, we use the inner tuple ($x, 1-x$) to indicate a mixed strategy where the player chooses the first action with probability $x$ and the second action with probability $1-x$.      

\vspace{\baselineskip}
\begin{theorem}
For $\varphi \le \frac{l-a}{l}$, $l>a$, $l>d$, CIAG has only one mixed-strategy PBE, in which:

\begin{itemize}
\item $\policyholder_S$ will always prefer CD, while $\policyholder_N$ randomizes between CD and NC with probability $\delta$ and $1-\delta$, respectively;

\item the insurer randomizes between A and NA with probability $\theta$ and $1-\theta$, respectively, against CD, and she always prefer NA against NC, with her beliefs about $\policyholder_S$ playing CD and NC being $\overline{\mu} = \frac{\varphi}{\varphi + (1-\varphi)\delta} \quad$ and $\quad \overline{\lambda} = 0 \label{eq:mudash_lambdadash}$, respectively, where 
\begin{revision}
\begin{align}
    \delta &= \frac{a}{(1-\varphi)l} \nonumber \\
    \theta &= \frac{U(W - p + d) - U(W - p)}{\beta \cdot \Big(U(W - p + d) - U(W - p + d - l) \Big)} \label{eq:mixedprob}
\end{align}
\end{revision}
\end{itemize}
\end{theorem}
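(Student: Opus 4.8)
The plan is to follow the standard recipe for a mixed-strategy PBE: first pin down the behaviour that is forced by strict dominance, then impose the two indifference conditions that make the remaining randomizations incentive-compatible, and finally fix all beliefs by Bayes' rule and argue uniqueness. First I would record, as already observed from Figure~\ref{fig:one_shot_insurance_model_nature}, that CD strictly dominates NC for $\policyholder_S$: her payoff is $U(W-p+d-c)$ under CD and $U(W-p-c)$ under NC regardless of the insurer's move, so $d>0$ makes CD strictly better. Hence in every PBE $\policyholder_S$ plays CD purely, and the only policyholder randomization possible is $\policyholder_N$ mixing, say CD with probability $\delta$. Symmetrically, at the NC information set auditing is strictly dominated for the insurer, since $\UI{N}{,NC,B,NA}=p-l>p-l-a=\UI{N}{,NC,B,A}$, so she plays NA there; the only insurer randomization is between A and NA at the CD information set, say A with probability $\theta$. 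This reduces the candidate space to the single configuration stated in the theorem.

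Next I would impose the insurer's indifference at the CD information set, which is necessary for $\theta\in(0,1)$. Writing $\overline{\mu}=\prob(\policyholder_S\mid\text{CD})$ and using the tabulated insurer payoffs, sequential rationality gives $\calU_\text{A}=p-d-a-\overline{\mu}\,l$ and $\calU_{\text{NA}}=p-l-d$; equating them forces the posterior $\overline{\mu}=\tfrac{l-a}{l}=\varphi^*$, exactly the Theorem~1 threshold. Since $\policyholder_S$ plays CD with probability $1$ and $\policyholder_N$ with probability $\delta$, Bayes' rule in Equation~\eqref{eq:mu} reads $\overline{\mu}=\tfrac{\varphi}{\varphi+(1-\varphi)\delta}$, and inverting $\overline{\mu}=\varphi^*$ determines $\delta$ uniquely. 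I would then check $\delta\in(0,1]$ precisely on the range $\varphi\le\varphi^*$, so that the construction is well defined on exactly the stated parameter region and degenerates to the pure equilibrium at the boundary $\varphi=\varphi^*$.

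I would then impose $\policyholder_N$'s indifference between CD and NC, which is necessary for $\delta\in(0,1)$. Against the insurer's strategy, NC yields $U(W-p)$ with certainty, while CD yields $[\,1-\beta\theta\,]\,U(W-p+d)+\beta\theta\,U(W-p+d-l)$ after averaging over the breach probability $\beta$ and the audit probability $\theta$ using $\policyholder_N$'s tabulated payoffs. Setting these equal and solving for $\theta$ yields exactly the expression in Equation~\eqref{eq:mixedprob}; monotonicity and concavity of $U$ guarantee that both numerator and denominator are positive, and I would verify $\theta\in(0,1]$ on the stated region. Finally the beliefs follow from Requirements~1 and~3: the CD set is on the equilibrium path and gives $\overline{\mu}$ above, while the NC set is reached only by $\policyholder_N$ (as $\policyholder_S$ never plays NC), so Bayes' rule immediately yields $\overline{\lambda}=0$. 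Both information sets are on-path, so no arbitrary off-equilibrium belief remains to be specified.

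The hard part will be the \emph{uniqueness} claim rather than existence. Existence reduces to the two indifference equations above, but to show this is the only PBE in the region I must rule out every other configuration: any profile in which $\policyholder_S$ abandons pure CD (impossible by strict dominance), any profile in which the insurer audits at NC (impossible by strict dominance), the pure profiles (excluded for $\varphi\le\varphi^*$ by the sequential-rationality inequality of Theorem~1), and the degenerate mixtures $\delta\in\{0,1\}$ or $\theta\in\{0,1\}$. Each degenerate case must be eliminated by a short best-response argument; for instance $\delta=0$ gives $\overline{\mu}=1$, whence the insurer strictly prefers NA and $\policyholder_N$ then strictly prefers to deviate to CD, while $\theta=0$ makes CD strictly optimal for $\policyholder_N$ and forces $\delta=1$. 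Showing that these cases are exhaustive, and that the coupled indifference system has a unique interior solution $(\delta,\theta)$, is the step that will require the most care.
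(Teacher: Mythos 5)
Your overall architecture (strict dominance pins $\policyholder_S$ to CD and the insurer to NA at the NC set; two indifference conditions determine $\delta$ and $\theta$; Bayes' rule fixes $\overline{\mu}$ and $\overline{\lambda}=0$) matches the paper's, and your derivation of $\theta$ from $\policyholder_N$'s indifference between CD and NC is exactly the paper's calculation. The genuine problem is your route to $\delta$. You impose the insurer's indifference \emph{at the CD information set given the posterior} $\overline{\mu}$, obtain $\overline{\mu}=(l-a)/l$, and invert Bayes' rule; carried out, that gives $\delta=\varphi a/\bigl((1-\varphi)(l-a)\bigr)$, which is \emph{not} the stated $\delta=a/\bigl((1-\varphi)l\bigr)$ --- the two expressions coincide only at the boundary $\varphi=\varphi^*$ (try $\varphi=0.5$, $l=10$, $a=2$: you get $0.25$ versus the theorem's $0.4$). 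The paper instead derives $\delta$ by equating the insurer's \emph{ex-ante} expected payoffs of ``always audit'' versus ``never audit,'' weighted by the priors $\varphi,1-\varphi$ and summed over \emph{both} the CD and NC branches, so the audit cost $a$ is also charged on the NC branch where the insurer in fact plays NA; that is what produces $a/\bigl((1-\varphi)l\bigr)$. So as written your proposal does not establish the theorem as stated. You must either reproduce the paper's ex-ante computation, or explicitly flag that the interim, belief-based indifference condition you use --- which is the standard Requirement-2 formulation and the one consistent with the paper's own definition of $\mu$ and with the Theorem-1 treatment of $\calU_{\text{A}}$, $\calU_{\text{NA}}$ --- yields a different mixing probability. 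Silently assuming the two derivations agree is the gap.

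Two smaller points. First, your promise to ``verify $\theta\in(0,1]$ on the stated region'' cannot be kept from the hypotheses $l>a$, $l>d$ alone: $\theta\le 1$ requires $U(W-p+d)-U(W-p)\le\beta\bigl(U(W-p+d)-U(W-p+d-l)\bigr)$, which depends on $\beta$ and the curvature of $U$ and is simply left implicit in the paper as well. Second, your uniqueness sweep (ruling out pure profiles and degenerate mixtures) is a genuine addition --- the paper only verifies existence --- but it is moot until the $\delta$ discrepancy is resolved, since uniqueness of the interior solution to the indifference system depends on which indifference system you are solving.
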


\begin{proof}
    The existence of mixed-strategy PBE is outlined below.
    \begin{itemize}
        \item[a)] \textit{Belief consistency}: Again we apply the Bayes' rule. By assuming that the policyholder sticks to the equilibrium strategy, the insurer can derive that $\prob(\text{CD}|\policyholder_S)$ $= 1$, $\prob(\text{NC}|\policyholder_S) = 0$, $\prob(\policyholder_S) = \varphi$, $\prob(\policyholder_N) = 1-\varphi$, $\prob(\text{CD}|\policyholder_N) = \delta$ and $\prob(\text{NC}|\policyholder_N) = 1-\delta$. Using Equations \eqref{eq:mu} and \eqref{eq:lambda} we obtain $\mu = \overline{\mu}$ and $\lambda = \overline{\lambda}$ in Equation \eqref{eq:mudash_lambdadash}.
    
        \item[b)] \textit{Optimal responses given beliefs and opponent's strategy}: Given these beliefs and the mixed strategy of the policyholder, an insurer's optimal strategy would maximize her payoff. The insurer can achieve this by randomizing her actions such that the expected payoffs is equal for all the actions of the policyholder. This is known as the \textit{indifference principle} in game theory. Thus, the expected utility of $\policyholder_N$ for choosing CD is
        
        \begin{align}
            \UP{N}{CD} =& 
            \beta \cdot \Big( \theta \cdot \UP{N}{CD,B,A} + (1 - \theta) \cdot \UP{N}{CD,B,NA} \Big) \nonumber \\
            & + (1 - \beta) \cdot \UP{N}{CD,NB} \nonumber \\ 
            = & \beta \cdot \theta \cdot U(W - p + d - l) \nonumber \\
            & + \beta \cdot (1 - \theta) \cdot U(W - p + d) \nonumber \\
            & + (1 - \beta) \cdot U(W - p + d) \nonumber \\
            = & \beta \cdot \theta \cdot \Big(U(W - p + d - l) - U(W-p+d) \Big) \nonumber \\
            &+ U(W - p + d) 
        \end{align}
        and for choosing NC, where NA is a dominating strategy of the insurer,~is
        \begin{align}
            \UP{N}{NC} &= \beta \cdot \UP{N}{NC,B,NA} + (1-\beta) \cdot \UP{N}{NC,NB} \nonumber \\
            &= \beta \cdot U(W - p) + (1 - \beta) \cdot U(W - p) \nonumber \\
            &= U(W - p) 
        \end{align}
        The indifference principle requires that $\calU^{\policyholder_N}_{\text{NC}} = \calU^{\policyholder_N}_{\text{CD}}$, which gives 
        \begin{align}
            U(W - p) &= \beta \! \cdot \! \theta \! \cdot \Big(\!U(W - p + d - l) - U(W - p + d) \! \Big) \nonumber \\
            &\quad+ U(W - p + d)  \nonumber \\
            \theta &= \frac{U(W - p + d) - U(W - p)}{\beta \cdot \Big(U(W - p + d) - U(W - p + d - l) \Big)} \nonumber
        \end{align}
        as in Equation \eqref{eq:mixedprob}. Similarly, the policyholder will also mix her strategy with an aim to make the insurer indifferent between choosing A and NA. Thus, 
        \begin{align}
            \calU^\insurer_\text{A} &= \varphi \cdot \UI{S}{,CD,B,A} \nonumber \\
            &\quad + (1-\varphi) \cdot \Big( \UI{N}{CD,B,A} + \UI{N}{NC,B,A} \Big) \nonumber \\
            &= \varphi \Big((1)(p-d-l-a) + (0)(p-l-a) \Big) \nonumber \\ 
            &\quad + (1-\varphi) \Big(\delta(p-d-a) + (1-\delta)(p-l-a) \Big) \nonumber \\
            &= p - l -a - \varphi d - \delta d + \delta l + \varphi \delta d - \varphi \delta l 
        \end{align}
        \begin{align}
            \calU_\text{NA} &= \varphi \cdot \UI{S}{,CD,B,NA} \nonumber \\
            &\quad + (1-\varphi) \cdot \Big( \UI{N}{CD,B,NA} + \UI{N}{NC,B,NA} \Big) \nonumber \\
            &= \varphi \Big((1)(p-d-l) + (0)(p-l) \Big) \nonumber \\ 
            &\quad + (1-\varphi) \Big(\delta(p-d-l) + (1-\delta)(p-l) \Big) \nonumber \\
            &= p - l - \varphi d - \delta d + \varphi \delta d 
        \end{align}
        and $\calU_{\text{A}} = \calU_{\text{NA}}$ gives 
        \begin{align}
            p - l -a - \varphi d - \delta d + \delta l + \varphi \delta d - \varphi \delta l =& p - l - \varphi d \nonumber \\
            & - \delta d + \varphi \delta d \nonumber \\
            \delta =& \frac{a}{(1-\varphi)l} \nonumber
        \end{align}
    \end{itemize}
     as in Equation \eqref{eq:mixedprob}.
\end{proof}



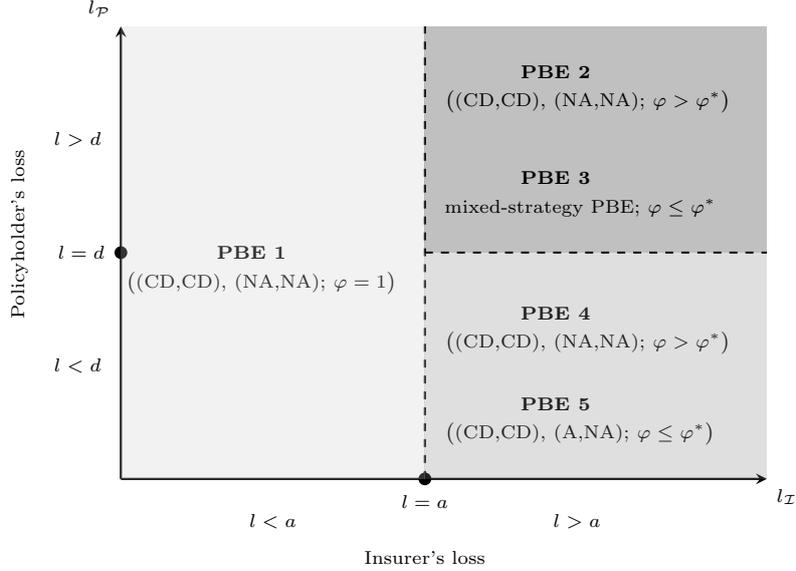
\begin{figure*}[h] 
\centering
\begin{tikzpicture}[thick,>=stealth,font=\scriptsize,
  dot/.style = {draw,fill = black,circle,inner sep = 0pt,minimum size = 4pt}
  ]
  \draw[thick,->] (0,0) -- (8.5,0) node[anchor=north west] {$l_{\insurer}$};
  \draw[thick,->] (0,0) -- (0,6) node[anchor=south east] {$l_{\policyholder}$};
  
  \draw (4,0) node[dot, label= {below:$l=a$}] {};
  \draw (2,-0.2) node[label={below:$l<a$}] {};
  \draw (6,-0.2) node[label={below:$l>a$}] {};
  \draw (4,-0.7) node[label={below:Insurer's loss}] {};
  \draw[dashed,-] (4,0) -- (4,6) {};
  \draw (0,3) node[dot, label= {left:$l=d$}] {};
  \draw (0,1.5) node[label={left:$l<d$}] {};
  \draw (0,4.5) node[label={left:$l>d$}] {};
  \draw (-1.2,4.5) node[label={[rotate=90]left:Policyholder's loss}] {};
  \draw[dashed,-] (4,3) -- (8.5,3) {};
  
  \draw (1,3) node[label={right:\textbf{PBE 1}}] {};
  \draw (-0.2,2.6) node[label={right:$\big($(CD,CD), (NA,NA); $\varphi=1 \big)$}] {};
  \draw (5,5.4) node[label={right:\textbf{PBE 2}}] {};
  \draw (4,5) node[label={right:$\big($(CD,CD), (NA,NA); $\varphi > \varphi^* \big)$}] {};
  \draw (5,4) node[label={right:\textbf{PBE 3}}] {};
  \draw (4,3.6) node[label={right:mixed-strategy PBE; $\varphi \le \varphi^*$ }] {};
  \draw (5,2.2) node[label={right:\textbf{PBE 4}}] {};
  \draw (4,1.8) node[label={right:$\big($(CD,CD), (NA,NA); $\varphi > \varphi^* \big)$}] {};
  \draw (5,1) node[label={right:\textbf{PBE 5}}] {};
  \draw (4,0.6) node[label={right:$\big($(CD,CD), (A,NA); $\varphi \le \varphi^* \big)$}] {};
  
  \fill[black!20,nearly transparent] (0,0) -- (4,0) -- (4,6) -- (0,6);
  \fill[black!50,nearly transparent] (4,0) -- (8.5,0) -- (8.5,3) -- (4,3) -- cycle;
  \fill[black!99,nearly transparent] (4,3) -- (8.5,3) -- (8.5,6) -- (4,6) -- cycle;
\end{tikzpicture}
\caption{Solution space of Cyber Insurance Audit Game (CIAG).}
\label{fig:one_shot_solution_space_graph}
\end{figure*}

We conceive all the possible PBEs for CIAG by exhaustively applying this methodology over all combinations of the players' strategy profiles for the four constraints described in Equations \eqref{constraint_1} to \eqref{constraint_4}. Figure \ref{fig:one_shot_solution_space_graph} presents the solution space of CIAG. It further shows how the equilibrium strategies of the players depends on the premium discount (d), audit cost (a), and loss~(l).

\section{Model Evaluation} \label{section:modelevaluation}
Our analysis in Section \ref{section:decisionanalysis} provides a framework for insurers to determine optimal auditing strategy against policyholders who can misrepresent their security levels to avail premium discounts. This section illustrates the methodology used to obtain values for various parameters of our model and simulation results using these values to determine the best strategy for the insurer.   

\subsection{Methodology and Data Collection}
\frenchspacing
A diverse set of data sources is needed to study the interaction between insurance pricing, the effectiveness of security controls, and the cost of auditing claims. To this end, we combine the following data sources: a US law requiring insurers to report pricing algorithms \cite{romanosky2017content}, analysis of a data set of over $12,000$ cyber events \cite{romanosky2016examining}, a study of the cost and effectiveness of security controls \cite{heitzenrater2016policy}, and a range of informal estimates regarding the cost of an information security audit.

The model assumes that nature determines incidents according to a Bernoulli distribution with loss amount $l$ and probability of loss $\beta$.  
Analysis of the data set of $12,000$ cyber incidents reveals data breach incidents occur with a median loss \$$170$K and frequency of around $0.015$ for information firms \cite{romanosky2016examining}, which we use as $l$ and $\beta$ respectively. 

We adopt the security control model used in \cite{heitzenrater2016policy}. Both fixed and operational costs are estimated using industry reports, which correspond to $c$ in our model. The effectiveness of a control is represented as a percentage decrease in the size or frequency of losses. For example, operating a firewall (\$2,960) is said to reduce losses by $80$\% \cite{heitzenrater2016policy}---leading to a probability of breach after investment ($\beta$*) of $0.2 \beta$. 

We downloaded all of the cyber insurance filings in the state of California and discarded off-the-shelf policies that do not change the price based on revenue, industry or security controls. This left $26$ different pricing algorithms and corresponding rate tables, the contents of which are described in \cite{romanosky2017content}.

Data breach coverage with a \$$1$ million limit was selected because it is the default coverage and it comfortably covers the loss value $l$ for a data breach on SMEs. The premium $p$ and discount $d$ varies based on the insurer. We chose a filing explicitly mentioning discounts for firewalls. For an information firm with $\$40$M of revenue, the premium $p$ is equal to \$3,630 and the filings provide a range of discounts up to $25$\%. The exact value depends on an underwriter's subjective judgment. To comprise this we consider multiple discounts in this range.

Estimating the insurer's cost of audit ($a$) is difficult because they could be conducted by loss adjusters within the firm or contracted out to IT specialists. With the latter in mind, we explored the cost of an information security audit. The cost depends on the depth of the assessment and the expertise of the assessor. However, collating the quoted figures suggests a range from $\$5,000$ up to $\$100,000$.

\subsection{Numerical Analysis}
We simulate the interaction between the cyber insurance policyholder and the insurer based on our game-theoretic model with parameter values described above. First, we compare the expected payoffs of the insurer for different strategic models: 
\begin{enumerate}
    \item the game-theoretic approach (GT) where the insurer chooses an appropriate strategy according to our analysis (refer to Figure \ref{fig:one_shot_solution_space_graph}) and can either audit or not audit;
    \item always auditing (A,A) regardless of whether the policyholder has claimed discount or not;
    \item always not auditing (NA,NA) regardless of whether the policyholder has claimed discount or not;
    \item auditing if the policyholder has claimed discount and not auditing if there is no discount claimed (A,NA);
    \item not auditing if the policyholder has claimed discount and auditing if there is no discount claimed (NA,A);
    \item auditing half the times regardless of whether the policyholder has claimed discount or not ($0.5$A,$0.5$A);
    \item auditing half the times when the policyholder has claimed discount and not auditing if there is no discount claimed ($0.5$A,NA).
\end{enumerate}

In the following simulation figures, the insurer's average payoffs with each strategic model are calculated against a policyholder who plays the PBE strategy obtained through our analysis. This policyholder is also the most challenging one for the insurer as it claims for a discount even in the case of non investment. The term ``$x$ repetitions of the game" reflects that CIAG is played $x$ number of independent runs for a set of parameter values.

From Figures \ref{fig:ins_avg_with_audit_5k_d_5} and \ref{fig:ins_avg_with_audit_5k_d_25}
we observe that the payoff of the insurer when choosing the GT model is always better than rest of the strategic models irrespective of the premium discount. The reason for this is that the model (A,NA), where the insurer audits only policyholders who have claimed the discount, is susceptible to auditing clients who have implemented additional security level bearing the auditing cost as a pure loss. Thus, the larger the number of honest policyholders, the higher the insurer's loss is. Additionally, the insurer's loss as expected increases with the increasing cost of audit. With the (NA,NA) model, the insurer chooses to reimburse the loss without confirming the policyholder's actual security level. Here, the insurer indemnifies even for cases where the policyholder has misrepresented her security level suffering heavy losses. Another non strategic approach would be to randomize over the choice of auditing or not auditing a policyholder who has claimed a premium discount. This strategy, represented by the model ($0.5$A,NA), gives a payoff within the range of payoffs from models (A,NA) and (NA,NA). The results exhibit that models (A,A), (NA,A), and (0.5A,0.5A) consistently performers poorly compared to other models.  

\begin{figure}[h]
     \centering
     \begin{subfigure}[b]{0.5\textwidth}
         \centering
         \includegraphics[width=\textwidth]{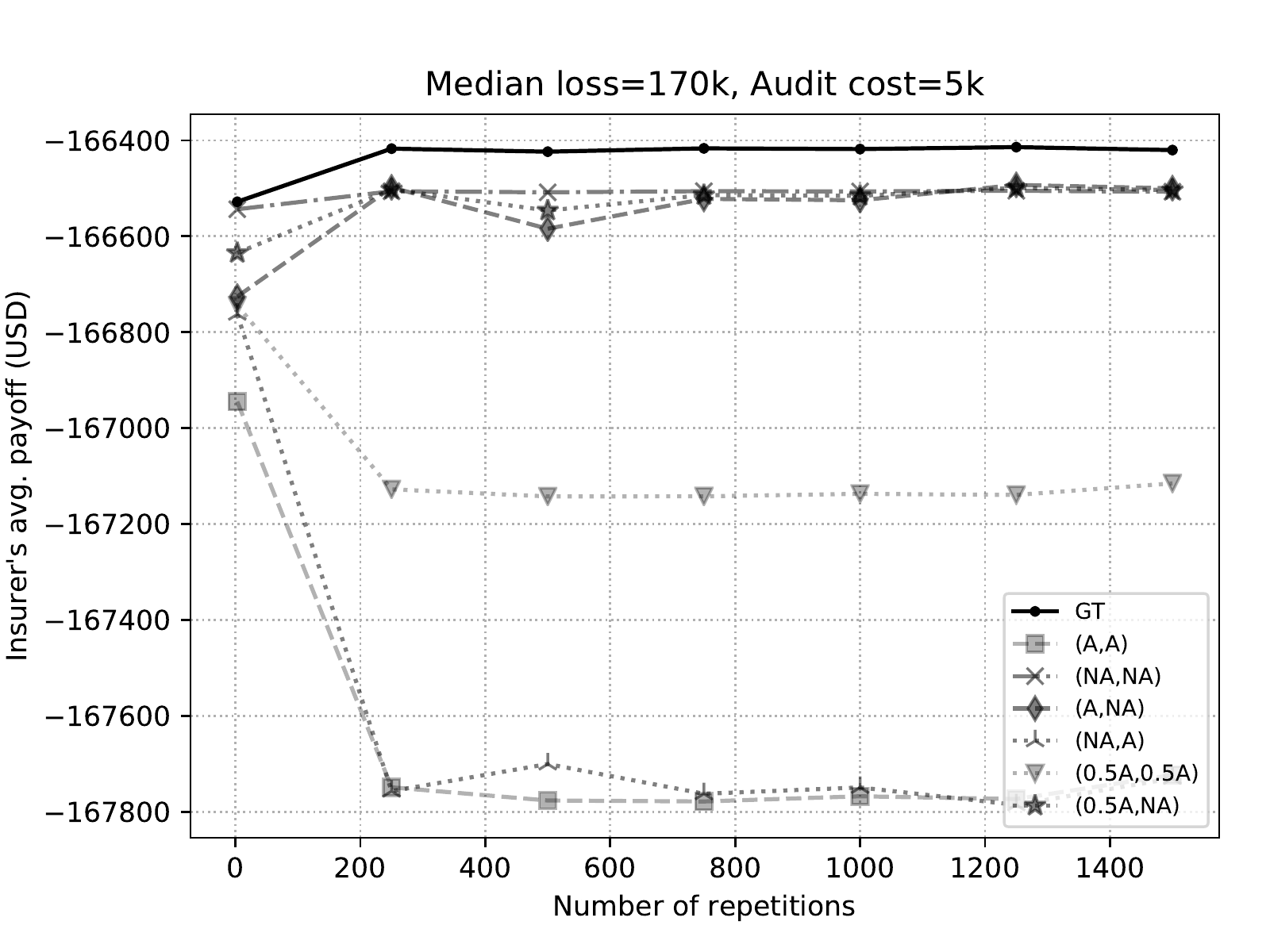}
         \caption{Premium discount $5\%$}
         \label{fig:ins_avg_with_audit_5k_d_5}
     \end{subfigure}
     \hfill
     \begin{subfigure}[b]{0.49\textwidth}
         \centering
         \includegraphics[width=\textwidth]{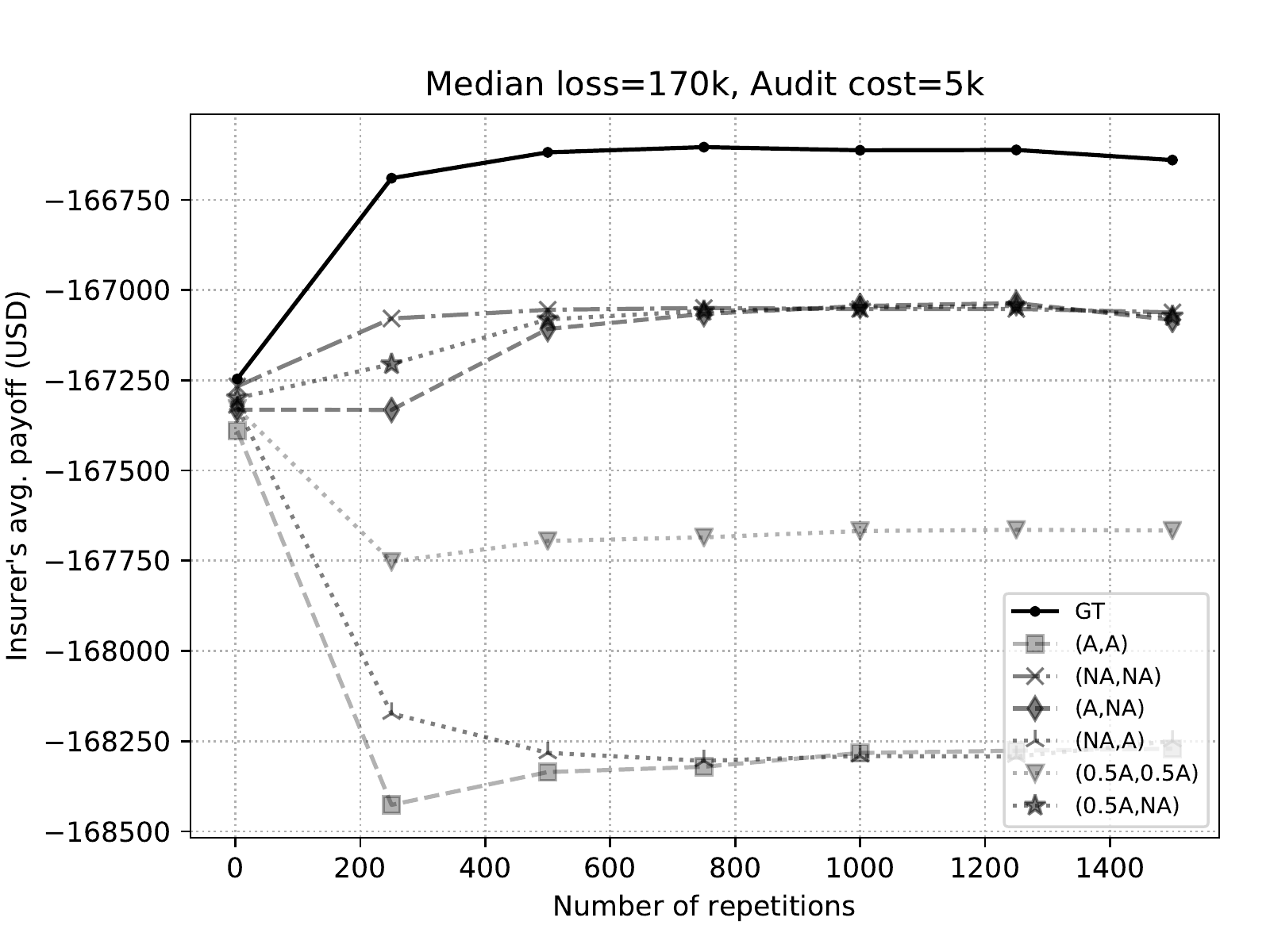}
         \caption{Premium discount $25\%$}
         \label{fig:ins_avg_with_audit_5k_d_25}
     \end{subfigure}
     \caption{Insurer's average payoff (in US dollars) with different strategic models across various repetitions of the game for a median loss $\$170$k, audit cost $\$5$k with (a) premium discount $5\%$ and (b) premium discount $25\%$.}
     \label{fig:ins_avg_a_5k}
\end{figure}

The GT model presents an optimal mix of (A,NA) and (NA,NA) where the insurer's decision to audit is based on \textit{a prior} belief regarding the policyholder's security investment. For the median loss of $\$170$k which is greater than both the audit cost and premium discount, the game solution is derived from the upper-right section of the solution space in Figure \ref{fig:one_shot_solution_space_graph}. In particular, when the insurer's belief ($\phi$) regarding the policyholder's security investment is greater than a threshold ($\varphi^*$), she prefers (NA,NA) i.e, \textbf{PBE 2}: $\big($(CD,CD),(NA,NA); $\varphi>\varphi^* \big)$. When the belief is lower than $\varphi^*$, she prefers a mixed approach (\textbf{PBE 3}) by simultaneously relying on (A,NA) and (NA,NA) and choosing whichever is more profitable. The GT model, thus, enables the insurer to take into account a prior belief regarding the policyholder's security investment under the condition of information asymmetry and maximize her payoffs given this belief. The figures further show that regardless of how many times the game has been played model GT performs better than the non-game-theoretic models. 

\begin{figure}[h]
     \centering
     \begin{subfigure}[b]{0.5\textwidth}
         \centering
         \includegraphics[width=\textwidth]{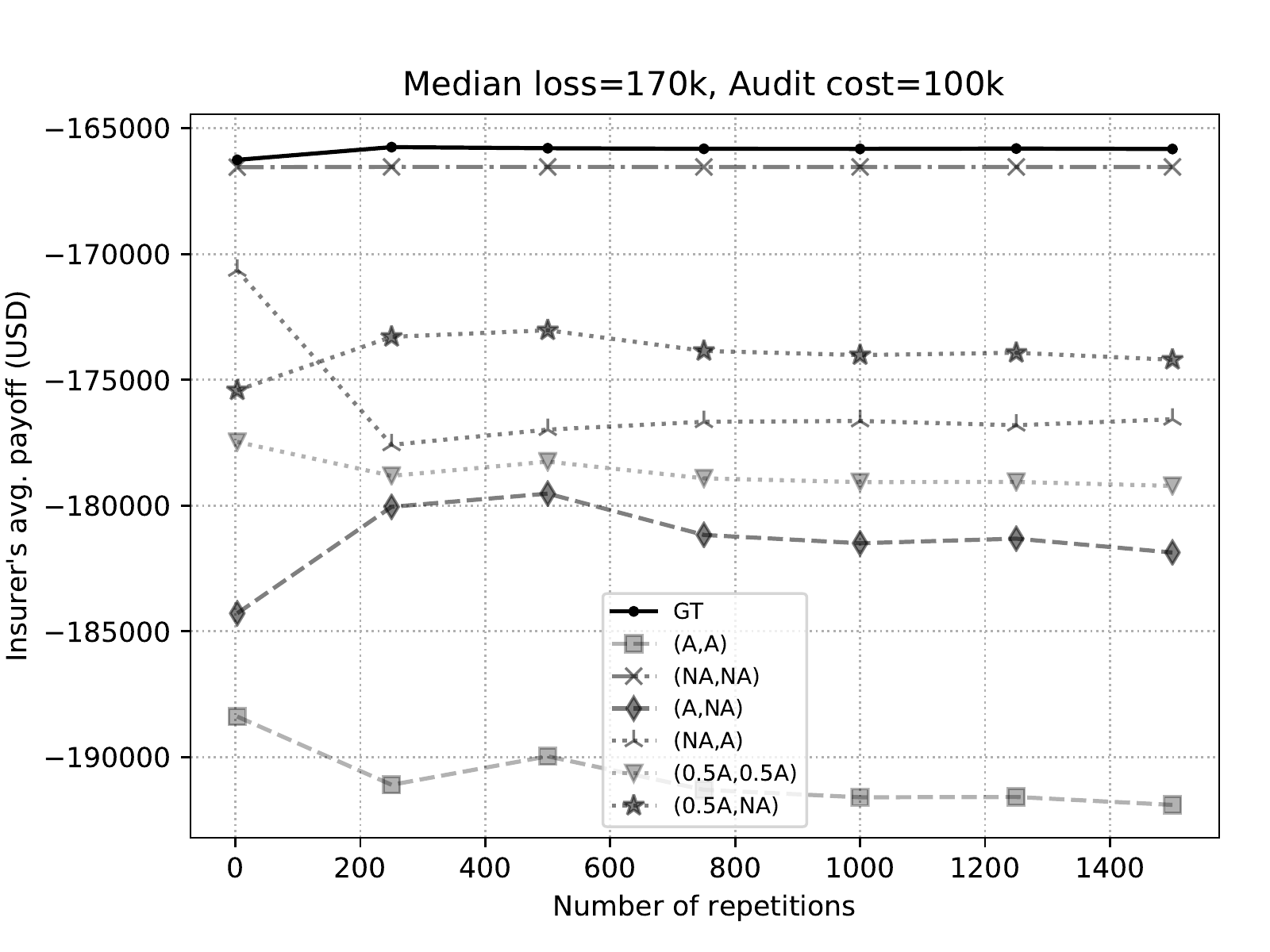}
         \caption{Premium discount $5\%$}
         \label{fig:ins_avg_with_audit_100k_d_5}
     \end{subfigure}
     \hfill
     \begin{subfigure}[b]{0.49\textwidth}
         \centering
         \includegraphics[width=\textwidth]{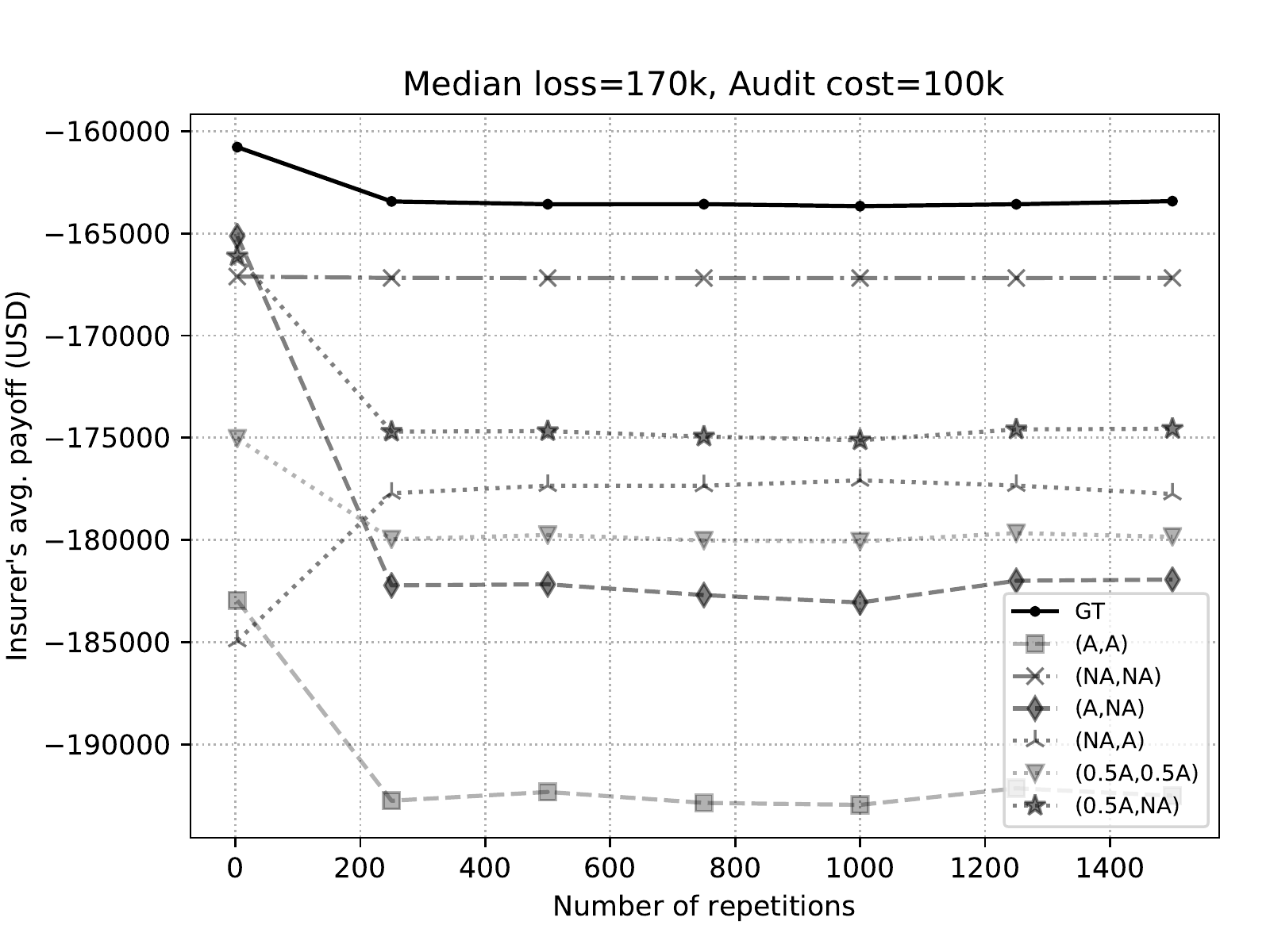}
         \caption{Premium discount $25\%$}
         \label{fig:ins_avg_with_audit_100k_d_25}
     \end{subfigure}
     \caption{Insurer's average payoff (in US dollars) with different strategic models across various repetitions of the game for a median loss $\$170$k, audit cost $\$100$k with (a) premium discount $5\%$ and (b) premium discount $25\%$.}
     \label{fig:ins_avg_a_100k}
\end{figure}

With higher audit cost i.e., $\$100$k in Figures \ref{fig:ins_avg_with_audit_100k_d_5} and \ref{fig:ins_avg_with_audit_100k_d_25}, we observe that the insurer's average payoff with the model (A,NA) decreases drastically confirming it's shortcomings as discussed above. In the case of 1500 independent repetitions for the highest values of audit cost and premium discount, the insurer gains, on average, a higher payoff when choosing GT as opposed to (NA,NA) model. The increased difference in the payoff is equivalent to $98\%$ of the annual premium charged to policyholder.

\begin{figure*}[h!]
     \centering
     \begin{subfigure}[]{0.5\textwidth}
         \centering
         \includegraphics[width=\textwidth]{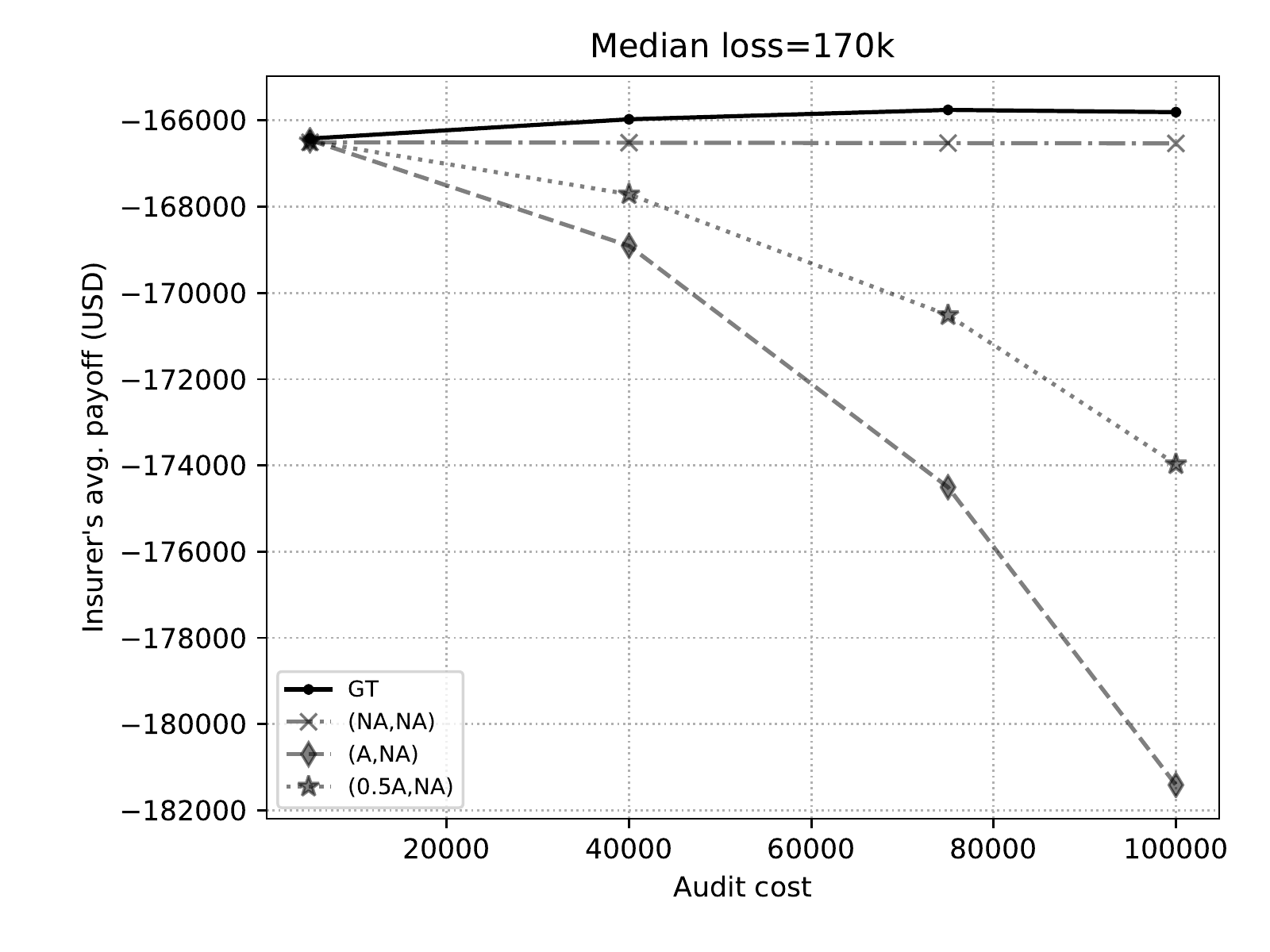}
         \caption{Premium discount $5\%$}
         \label{fig:ins_avg_vs_a_d_5}
     \end{subfigure}
     \hfill
     \begin{subfigure}[]{0.49\textwidth}
         \centering
         \includegraphics[width=\textwidth]{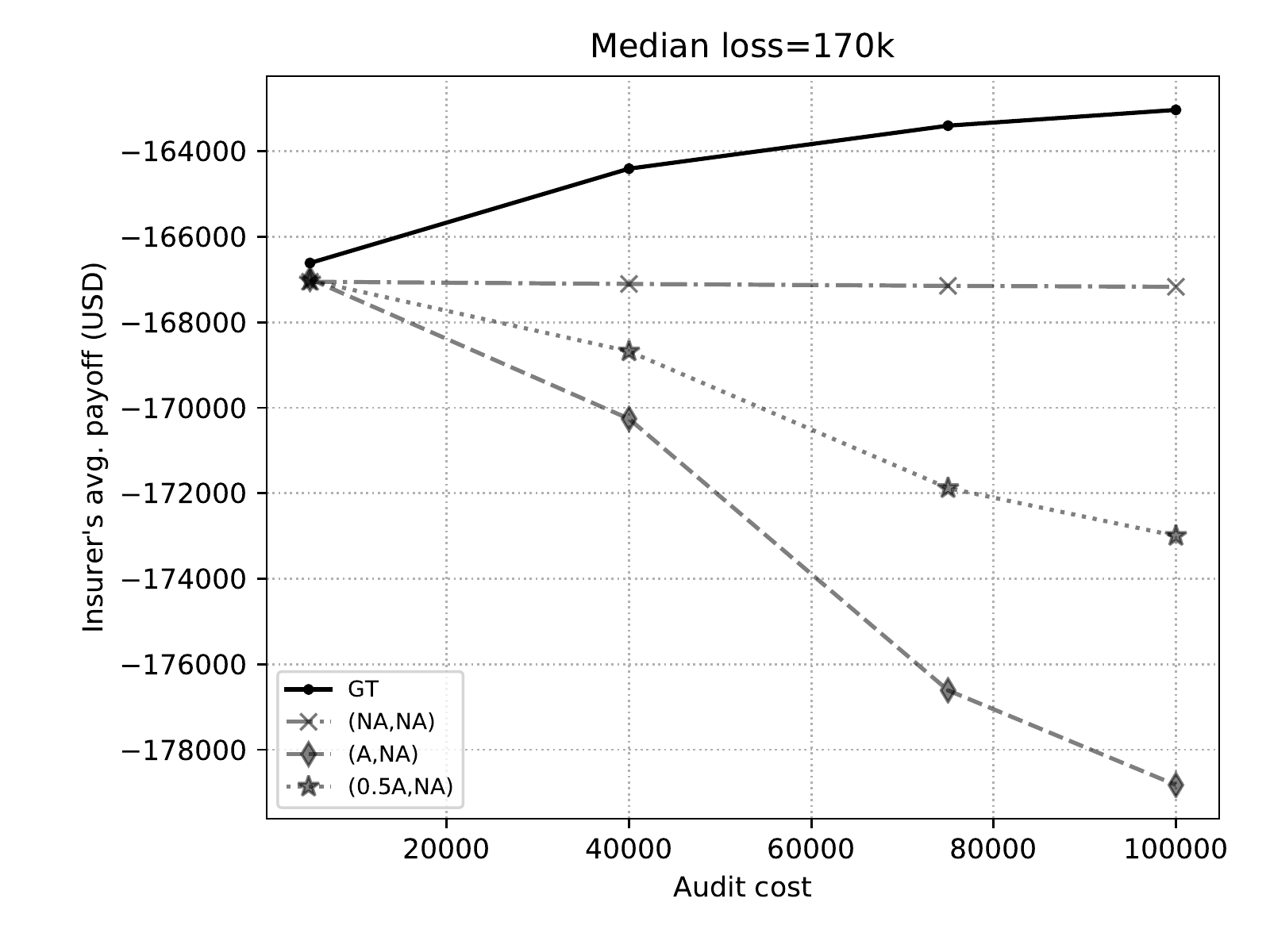}
         \caption{Premium discount $25\%$}
         \label{fig:ins_avg_vs_a_d_25}
     \end{subfigure}
     \hfill
     \begin{subfigure}[]{0.5\textwidth}
         \centering
         \includegraphics[width=\textwidth]{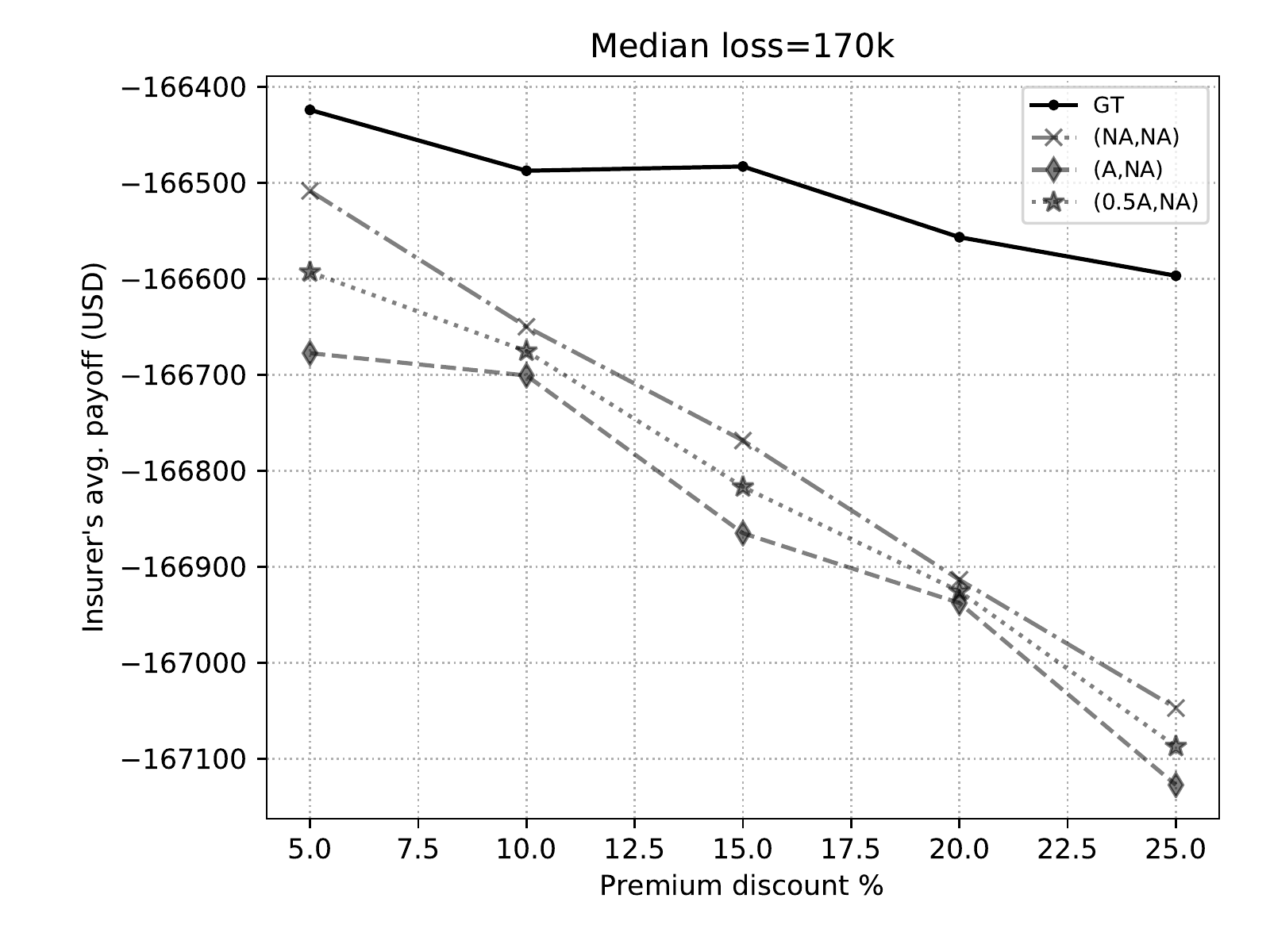}
         \caption{Audit cost $5$k}
         \label{fig:ins_avg_vs_d_a_5}
     \end{subfigure}
     \hfill
     \begin{subfigure}[]{0.49\textwidth}
         \centering
         \includegraphics[width=\textwidth]{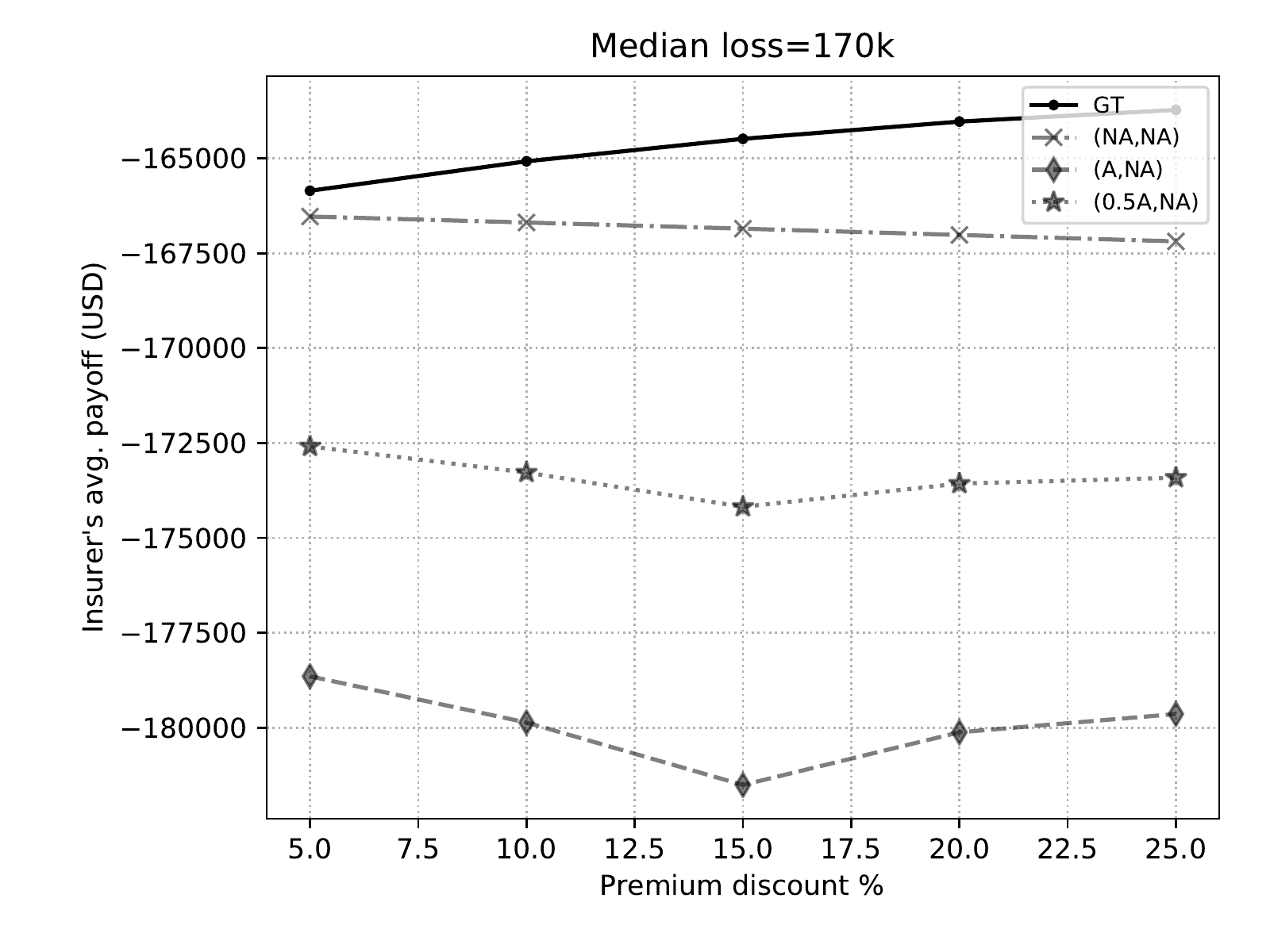}
         \caption{Audit cost $100$k}
         \label{fig:ins_avg_vs_d_a_100}
     \end{subfigure}
     \caption{Insurer's average payoff (in US dollars) with different strategic models for median loss $\$170$k against audit costs with premium discount (a) $5\%$ and (b) $25\%$ and against premium discount with audit cost (c) $5$k and (d) $25$k.}
     \label{fig:ins_avg_vs_ad}
\end{figure*}

\textbf{Remark 1:} For constant loss, as premium discount increases, GT consistently outperforms all other strategic models 
for various repetitions of the game in both sets of experiments with minimum and maximum values of audit cost.

Next, the simulation results are obtained over 100 repetitions with a median loss of $\$170k$ against a range of audit cost, premium discount and loss. Note that the models (A,A), (NA,A), and (0.5A,0.5A) are omitted from the figures as they perform worse than others, and for ease of presentation. 

Figures \ref{fig:ins_avg_vs_a_d_5} and \ref{fig:ins_avg_vs_a_d_25} show that there is a point of convergence where the strategy largely doesn't matter, but then as the audit cost increases, there is motivation for playing the game-theoretic solution as any other solution is worse. As discount increases, a policyholder might be highly stimulated to receive premium discount given that the insurer will grant this without auditing her before an incident occurs. This escalates the possibilities of the policyholder misrepresenting her actual security level. Given this possibility, GT noticeably dominates other strategic models as seen in Figures \ref{fig:ins_avg_vs_d_a_5} and \ref{fig:ins_avg_vs_d_a_100}. Further, in the case of 100 independent repetitions with the highest values of premium discount and audit cost, deploying GT gives the insurer on average a higher payoff compared to the next best model which is (NA,NA). The increased difference in the payoff is equivalent to $60\%$ of the annual premium charged to the policyholder. 

\textbf{Remark 2:} For a constant loss, as premium discount and audit cost increase, GT outperforms all other strategic models.

\begin{figure*}[h]
    \centering
    \includegraphics[width=0.7\textwidth]{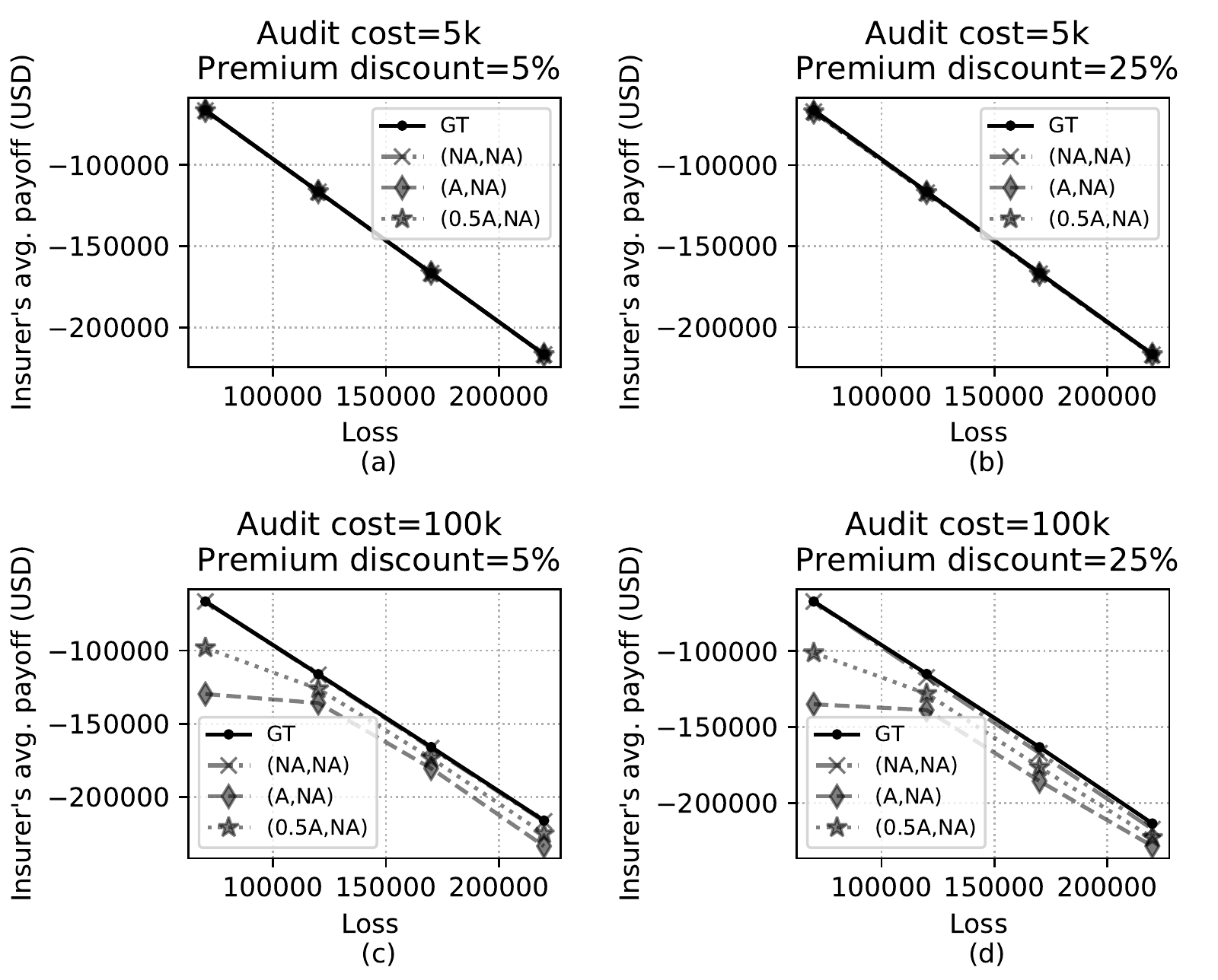}
    \caption{Insurer's average payoff (in US dollars) with different strategic models against loss for various audit costs and premium discounts.}
    \label{fig:ins_avg_vs_l}
\end{figure*}

Figure \ref{fig:ins_avg_vs_l} shows that there is essentially nothing special about the loss as a contributing factor with low audit cost, but become discriminatory as the audit cost approaches the loss. GT and (NA,NA) performs equally well until this condition, but as the discount increases with the audit cost, GT exceeds (NA,NA). In this case, for 100 independent repetitions, insurers gain on average a higher payoff with model GT, compared to model (NA,NA), the next best. The increased difference is payoff is equivalent to $66\%$ of the annual premium charged to the policyholder.

\textbf{Remark 3:} As premium discount, audit cost, and loss increase, GT consistently outperforms all other strategic models.

In summary, we have demonstrated how an insurer may use our framework in practice to determine the best auditing strategy against a policyholder. We have illustrated how the insurer's payoff is maximized by strategically choosing to audit or not in the event of a breach. Such strategic behaviour also allows the insurer to maximize her payoff against policyholders who can misrepresent their security levels to avail premium discount.

\section{Conclusion} \label{section:conclusion}
\frenchspacing
Speaking to cyber insurance providers reveals concerns about the discrepancy between the security policies applicants report that they follow, in the application process, and the applicant's compliance with these policies once coverage is in place. To address this, we developed a game-theoretic framework investigating audits as a mechanism to disincentivize misrepresentation of security level by policyholders. Thus far, we know of one instance \cite{columbia2016complaint} denying cyber insurance coverage due to non-compliance with the security practices as defined in the insurance contract. Although there could have been denials settled in private, this suggests that most cyber insurance providers follow the \emph{never audit} strategy. Our analysis derived a game-theoretic strategy that outperforms na\"ive strategies, such as never audit. By considering the post-incident claims management process, we demonstrated how a cyber insurance market can avoid collapse (contradicting \cite{schwartz2013cyber}) when the policyholder can fraudulently report their security level.

\begin{revision}
To extend this paper, future work could consider modelling uncertainty about the effectiveness of the implemented security measure. In the current model, the policyholder's type is chosen by Nature according to some probability distribution. It could be extended such that the policyholder maximizes expected payoff by selecting an investment strategy based on the beliefs about her type. This consideration would extend, for example, our analysis to consider the overall utility function of the policyholder, that is considering both the investment and no investment types simultaneously, and maximizing the expected payoff. 

Another interesting direction is investigating how the potential loss $l$ changes as a function of the security investment. In this case, we will be looking into different types of risk profiles of the policyholders. We could also investigate the trade-off between the additional investment, discount, and residual risk.
\end{revision}

Finally, a future extension could make investment in security a strategic choice for the policyholder in a multi-round game with a \emph{no claims bonus}, as our data set describes the size of these discounts. We could also allow belief updates to influence insurer choices on each iteration.

\section*{Acknowledgements}
\begin{revision}
We thank the reviewers for their valuable feedback and comments.
\end{revision}

Sakshyam Panda and Emmanouil Panaousis have been partially funded by the European Union’s Horizon 2020 research and innovation programme under the Marie Skłodowska-Curie SECONDO grant agreement No 823997.  

Daniel Woods could participate thanks to a Fulbright Cyber Security Award.

\end{document}